\newcommand{\F}{\mathbb{F}}
\newcommand{\R}{\mathbb{R}}
\newcommand{\eps}{\varepsilon}
\theoremstyle{plain}\newtheorem{theorem}{Theorem}[section]
\newtheorem{lemma}[theorem]{Lemma}
\theoremstyle{definition}
\title{Faster Walsh-Hadamard Transform and Matrix Multiplication over Finite Fields using Lookup Tables} 
\author{Josh Alman\footnote{Columbia University. \texttt{josh@cs.columbia.edu}.}}
\begin{document}

\maketitle

\begin{abstract}

We use lookup tables to design faster algorithms for important algebraic problems over finite fields. 
These faster algorithms, which only use arithmetic operations and lookup table operations, may help to explain the difficulty of determining the complexities of these important problems. 
Our results over a constant-sized finite field are as follows.

The Walsh-Hadamard transform of a vector of length $N$ can be computed using $O(N \log N / \log \log N)$ bit operations. This generalizes to any transform defined as a Kronecker power of a fixed matrix. By comparison, the Fast Walsh-Hadamard transform (similar to the Fast Fourier transform) uses $O(N \log N)$ arithmetic operations, which is believed to be optimal up to constant factors.

Any algebraic algorithm for multiplying two $N \times N$ matrices using $O(N^\omega)$ operations can be converted into an algorithm using $O(N^\omega / (\log N)^{\omega/2 - 1})$ bit operations. For example, Strassen's algorithm can be converted into an algorithm using $O(N^{2.81} / (\log N)^{0.4})$ bit operations. It remains an open problem with practical implications to determine the smallest constant $c$ such that Strassen's algorithm can be implemented to use $c \cdot N^{2.81} + o(N^{2.81})$ arithmetic operations; using a lookup table allows one to save a super-constant factor in bit operations.

\end{abstract}

\thispagestyle{empty}
\newpage
\setcounter{page}{1}

\section{Introduction}

When $N$ is a power of $2$, the Walsh-Hadamard transform is defined recursively by $H_2 = \begin{bmatrix}
1 & 1\\
1 & -1\\
\end{bmatrix},$ and $$H_N = \begin{bmatrix}
H_{N/2} & H_{N/2}\\
H_{N/2} & -H_{N/2}\\
\end{bmatrix}.$$ A common task in many areas of computation is to \emph{compute the length-$N$ Walsh-Hadamard transform}, i.e., given as input a length-$N$ vector $v \in \F^N$, compute the vector $H_N v$. The most straightforward algorithm would compute this using $O(N^2)$ arithmetic operations, but the fast Walsh-Hadamard transform (FWHT) algorithm can compute this using only $O(N \log N)$ arithmetic operations. It is widely believed that $\Omega(N \log N)$ arithmetic operations are necessary, and a substantial amount of work has gone into proving this in restricted arithmetic models\footnote{For instance, this is known to hold for arithmetic circuits with `bounded coefficients'; see e.g.,~\cite[{Section 3.3}]{lokam2009complexity}.}, and studying conjectures which would imply this\footnote{For instance, the now-refuted conjecture that the Walsh-Hadamard transform is rigid~\cite{alman2017probabilistic}.}; see, for instance, the survey~\cite{lokam2009complexity}.

A natural question arises: why restrict ourselves to arithmetic models of computation? The Walsh-Hadamard transform is commonly used in practice, and speedups using non-arithmetic operations could be impactful. Nonetheless, there has not been much work on non-arithmetic algorithms for the Walsh-Hadamard transform.

A related problem is matrix multiplication: given as input two $n \times n$ matrices, compute their product. The asymptotically fastest known algorithm is algebraic, and uses $O(n^{2.373})$ arithmetic operations. That said, non-algebraic algorithmic techniques, especially lookup tables, have been used to design more practical, `combinatorial' algorithms for a variant on this problem called Boolean matrix multiplication\footnote{In Boolean matrix multiplication, given two matrices whose entries are from $\{0,1\}$, our goal is to multiply them over the AND-OR semiring, or equivalently, to determine which entries of their product over $\R$ are nonzero (but not necessarily what nonzero values they take on).} since the work of~\cite{arlazarov1970economical}. These techniques save logarithmic factors over the straightforward $O(n^3)$ time algorithm -- the best algorithm along these lines runs in $n^3 \cdot \frac{(\log\log n)^{O(1)}}{(\log n)^3}$ time~\cite{yu2018improved} in the standard word RAM model -- but they are considered more practical than the algebraic algorithms (which save polynomial factors in $n$). Lookup tables have also been used in practice to approximately multiply matrices~\cite{jeon2020biqgemm,blalock2021multiplying}.

In this paper, we show that lookup tables can be used to speed up the asymptotically fastest known algorithms for both the Walsh-Hadamard transform and (exact) matrix multiplication \emph{over finite fields}. We will show, for instance, that only $o(N \log N)$ bit operations suffice to compute the length-$N$ Walsh-Hadamard transform over finite fields when we augment the arithmetic model to allow for lookup tables. This may help to explain the difficulty of proving lower bounds for this problem, and help to guide future work on arithmetic circuit lower bounds (since any lower bounds technique would need to fail when lookup tables are allowed).

We focus here on constant-sized finite fields, though our algorithms generalize to larger finite fields as well. Our algorithms are simple modifications of the usual recursive algorithms for solving these problems, and we describe below how they compare favorably to algorithms which are used in practice.

\subsection{Model of computation}

As discussed, these problems are typically studied in the arithmetic circuit model, wherein an algorithm may only perform arithmetic operations ($+,-,\times,\div$) over the field $\F$ applied to inputs and fixed constants, and the arithmetic complexity of the algorithm is the number of such operations. The asymptotically fastest known algorithms for the problems we study here all fit within this model. However, we would also like the ability to use lookup tables, so we consider a more general model: the \emph{bit operations model}.

The bit complexity of a RAM algorithm is the number of operations on bits performed by the algorithm. This is the natural model with the most direct comparison to the arithmetic model, since an algorithm with arithmetic complexity $T$ over a constant-sized finite field naturally has bit complexity $O(T)$. This model is often used as a more realistic version of the arithmetic model (see e.g.,~\cite{pan1981bit,lingas1991bit,van2013bit,el2018bit}).

One can see (via a simple tree data structure, for instance) that a lookup table with $b$-bit keys and values can be implemented so that values can be looked up and changed using $O(b)$ bit operations.

We note before moving on that all the algorithms in this paper will only perform arithmetic operations and lookup table operations; one could also define a nonstandard model of computation which allows for just these two types of operations, and get the same results.

\subsection{Results: Walsh-Hadamard Transform}

Our first result is a faster algorithm for the Walsh-Hadamard transform.

\begin{theorem}
Let $\F_q$ be a finite field of size $q = O(1)$, let $n$ be a positive integer, and let $N = 2^n$. There is an algorithm for computing the length-$N$ Walsh-Hadamard transform over $\F_q$ that uses $O\left(\frac{N \log N}{\log \log N}\right)$ bit operations.
\end{theorem}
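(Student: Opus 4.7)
The plan is to accelerate the standard fast Walsh--Hadamard transform (FWHT) using a Four Russians style lookup table. Recall that the FWHT proceeds in $n = \log N$ rounds, where round $j$ replaces each pair of entries whose indices differ only in bit $j$ with their sum and difference; since each $\F_q$-entry fits in $O(\log q) = O(1)$ bits, the total cost is $O(N \log N)$ bit operations. The key observation is that any $b$ consecutive rounds can be grouped into a single ``super-round'' that applies $H_{2^b}$ independently to each of the $N/2^b$ sub-vectors obtained by fixing the other $n - b$ coordinates of the index; each such sub-vector lies in $\F_q^{2^b}$, so there are only $q^{2^b}$ possible inputs to precompute on.

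I would set $b = \lfloor \log_2(\frac{1}{2} \log_q N) \rfloor$, so that $q^{2^b} \leq \sqrt{N}$ and $b = \log_2 \log N - O(1)$. First, I build a lookup table sending each $v \in \F_q^{2^b}$ to $H_{2^b} v$ by running the ordinary FWHT on every input; the precomputation costs $O(q^{2^b} \cdot 2^b \cdot b) = O(\sqrt{N} \log N \log \log N) = o(N \log N / \log \log N)$. Each key and value occupies $O(2^b) = O(\log N)$ bits, so the tree structure mentioned in the model discussion supports one lookup in $O(2^b)$ bit operations. Then I run the transform as $\lceil n / b \rceil = O(\log N / \log \log N)$ super-rounds; each super-round iterates over its $N / 2^b$ sub-vectors, reads the $2^b$ entries (at a fixed stride), performs one table lookup, and writes the $2^b$-entry result back. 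Each super-round therefore costs $O(N)$ bit operations, giving a grand total of $O(N \log N / \log \log N)$.

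The main subtlety to watch is the strided memory access pattern in super-rounds that operate on higher-order bits of the index. I would address this either by traversing the sub-vectors in an order where successive addresses can be updated with $O(1)$ amortized bit operations, or by inserting an $O(N)$-cost transposition between super-rounds so that the next one again operates on contiguous blocks; either way, the per-super-round bookkeeping stays within the $O(N)$ budget. Since $H_N = H_2^{\otimes n}$ factors as a Kronecker product, the final answer does not depend on how the $n$ bit positions are split into super-rounds, and this factorization lets the same argument go through verbatim when $H_2$ is replaced by any fixed matrix in the advertised Kronecker-power generalization.
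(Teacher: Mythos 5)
Your proposal is correct and takes essentially the same approach as the paper: both precompute a lookup table of $H_K v$ for $K = 2^b = \Theta(\log N)$ chosen so the table has at most $\sqrt{N}$ entries, and then apply it via $O(\log N/\log\log N)$ levels acting on $N/K$ sub-vectors each, your ``super-round'' formulation being an iterative restatement of the paper's recursive Yates'-algorithm lemma (Lemma~\ref{lem:yates}). The one detail to spell out is the final partial super-round when $b \nmid n$, which needs either a second (smaller) table or $O(N \log\log N)$ ordinary butterfly rounds, exactly as in the paper's treatment of the case where $k$ does not divide $n$.
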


By comparison, the fast Walsh-Hadamard transform algorithm, which uses $\Theta(N \log N)$ arithmetic operations, would take $\Theta(N \log N)$ bit operations. $\Theta(N \log N)$ arithmetic operations is widely believed to be optimal over any field (whose characteristic is not 2; the problem is trivial in that case), but our algorithm improves on this using lookup tables.

Our algorithm uses the same recursive approach as the fast Walsh-Hadamard transform; our main idea is to use results from a lookup table to quickly jump forward many recursive layers at a time.

Although the Walsh-Hadamard transform is most often applied over the real or complex numbers, such as in signal processing and data compression, 
it has been applied over finite fields in areas including coding theory~\cite{rajan2001quasicyclic,xu2019three}, cryptographic protocols~\cite{helleseth2010new,mesnager2019two}, and learning algorithms~\cite{liu2020deep}.

Our algorithm also generalizes directly to any transform defined by Kronecker powers of a fixed matrix over a finite field. For instance, given as input the $2^n$ coefficients of a multilinear polynomial in $n$ variables over $\F_2$, we can compute its evaluation on all $2^n$ inputs from $\F_2^n$ in $O(2^n \cdot n / \log n)$ bit operations (improving over the usual recursive algorithm by a factor of $\log n$)\footnote{This problem corresponds to computing the linear transform defined by Kronecker powers of the matrix $\begin{bmatrix}
1 & 1\\
1 & 0\\
\end{bmatrix}$.}.

\subsection{Results: Matrix Multiplication}

Our result for matrix multiplication shows how to convert any algebraic algorithm into one which uses a lookup table to save a superconstant factor.

\begin{theorem} \label{thm:mainmm}
For any finite field $\F_q$ of size $q = O(1)$, suppose there is an algebraic algorithm for multiplying $n \times n$ matrices over $\F_q$ in $O(n^{\tau})$ arithmetic operations for some constant $\tau > 2$. Then, there is another algorithm for multiplying $n \times n$ matrices over $\F_q$ which uses $O(n^\tau  / (\log n)^{\tau/2 - 1})$ bit operations.
\end{theorem}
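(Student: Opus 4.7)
The plan is to lift the assumed algebraic algorithm to operate on $k \times k$ sub-blocks at the outer level, while the innermost sub-block multiplications are resolved by a single precomputed lookup.

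I would fix the block size $k = \Theta(\sqrt{\log n})$, chosen so that $q^{k^2} \leq \sqrt{n}$, and let $b := k^2 \log q = O(\log n)$ be the number of bits needed to encode a single $k \times k$ matrix over $\F_q$. As preprocessing, I would build a lookup table $M$ mapping each ordered pair of $k \times k$ matrices (an encoding of $2b = O(\log n)$ bits) to the encoding of their product, computing each entry by the naive $O(k^3)$ algorithm. The table has $q^{2k^2} \leq n$ entries of $b$ bits each, so populating and storing it costs $O(n \cdot k^3) = O(n \cdot (\log n)^{3/2})$ bit operations, well below the main-term budget whenever $\tau > 2$.

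Next, I would partition $A, B \in \F_q^{n \times n}$ into $(n/k) \times (n/k)$ grids of $k \times k$ sub-blocks, each stored as a single $b$-bit encoding (conversion takes $O(n^2)$ bit operations). I would then run the hypothesized $O(m^\tau)$-operation algebraic algorithm at the outer level with $m = n/k$, treating each sub-block as a single element of the ring $M_k(\F_q)$. Each outer-level addition, subtraction, or multiplication by a fixed field constant is performed entrywise on the sub-block encodings in $O(k^2) = O(\log n)$ bit operations. Each outer-level sub-block multiplication is resolved by one lookup into $M$ at cost $O(b) = O(\log n)$ bit operations, using the paper's bound that a table with $b$-bit keys and values supports a lookup in $O(b)$ bit operations. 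The total cost is therefore $O((n/k)^\tau) \cdot O(\log n) = O(n^\tau / k^{\tau - 2}) = O(n^\tau / (\log n)^{\tau/2 - 1})$, matching the target.

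The main subtlety I anticipate is justifying that an algebraic algorithm stated for matrices over the field $\F_q$ can be applied correctly to matrices over the noncommutative ring $M_k(\F_q)$. My preferred resolution is to invoke the classical equivalence between tensor rank and arithmetic-circuit complexity for matrix multiplication: any $O(n^\tau)$ algebraic algorithm yields a bilinear algorithm of the same asymptotic size, and bilinear algorithms — which only ever multiply $A$-entries by $B$-entries, never $A$-entries by $A$-entries — compose correctly over an arbitrary associative ring, in particular over $M_k(\F_q)$. With this in place, the remaining checks (preprocessing cost, encoding cost, cost of the outer-level additions) are all routine given $\tau > 2$ and $k = \Theta(\sqrt{\log n})$.
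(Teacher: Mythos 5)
Your proposal is correct and is essentially the paper's proof: the paper likewise fixes a block size $s = \Theta(\sqrt{\log n})$ with $q^{O(s^2)} \leq \sqrt{n}$, precomputes a lookup table of all products of matrices of that size, converts the algebraic algorithm into a bilinear identity (citing Strassen's equivalence, just as you invoke tensor rank), and runs the resulting recursive algorithm with the table as base case --- your ``run the bilinear algorithm over the ring $M_k(\F_q)$'' view is the same algorithm described bottom-up rather than top-down, and your cost accounting matches the paper's recurrence. The only detail the paper handles that you elide is divisibility (it stores products for all sizes $s' \leq s$ so the recursion may bottom out below $s$; padding $n$ to a multiple of $k$ works equally well).
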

This speeds up the standard implementation of the algebraic algorithm by a factor of $\Theta((\log n)^{\tau/2 - 1})$. For instance, Strassen's algorithm~\cite{strassen} gives $\tau = 2.81$, resulting in an algorithm using $O(n^{2.81} / (\log n)^{0.4})$ bit operations by Theorem~\ref{thm:mainmm}, and the asymptotically fastest known algorithm~\cite{coppersmith,alman2021refined} gives $\tau = 2.373$, resulting in an algorithm using $O(n^{2.373} / (\log n)^{0.186})$ bit operations by Theorem~\ref{thm:mainmm}.

Notably, much work has gone into improving the leading constant in the running time of Strassen's algorithm. Strassen's original algorithm~\cite{strassen} has leading constant $7$ (meaning, it uses $7 n^{\log_2(7)} + o(n^{\log_2(7)})$ operations), and Winograd~\cite{winograd1971multiplication} improved this to $6$. This was believed to be optimal due to lower bounds by Probert~\cite{probert1976additive} and Bshouty~\cite{bshouty1995additive}. However, in a recent breakthrough, Karstadt and Schwartz~\cite{karstadt2020matrix} gave a new `change of basis' approach, and used it to improve the leading constant to $5$. They showed that $5$ is optimal for their new approach, and later work showed it is also optimal for the more general `sparse decomposition' approach~\cite{beniamini2019faster}. The fact that we achieve an asymptotic speedup in this paper (which one can view as achieving leading constant $\eps$ for any $\eps>0$ in bit complexity) may help to explain the difficulty of extending these lower bounds on the constant beyond restricted classes of algorithmic approaches.

Our approach for matrix multiplication is one that is commonly used in practice: use iterations of a recursive algebraic algorithm until the matrices one would like to multiply are sufficiently small, and then use a different algorithm optimized for small matrices. When this is used in practice, an optimized version of the straightforward (cubic-time) algorithm is used for small matrices, giving a constant factor improvement to the running time; see e.g.,~\cite{huang2016strassen}. 
We implement this approach by instead using lookup tables to multiply superconstant-sized matrices very quickly, and thus get an asymptotic improvement.

Matrix multiplication over finite fields has many applications. One prominent example is Boolean matrix multiplication, which has a simple randomized reduction to matrix multiplication over any field\footnote{Set each entry of one of the matrices to 0 independently with probability $1/2$, then multiply the two matrices and check which entries are nonzero.}. Hence our algorithm gives an asymptotic speedup for applications of Boolean matrix multiplication such as detecting triangles in graphs.

\section{Preliminaries}

\subsection{Notation}

Throughout this paper, we write $\log$ to denote the base $2$ logarithm. For a positive integer $n$, we use the notation $[n] := \{1,2,3,\ldots,n\}$. 

\subsection{Kronecker Products}

If $\F$ is a field, and $A \in \F^{N \times N}$ and $B \in \F^{M \times M}$ are matrices, their \emph{Kronecker product} $A \otimes B \in \F^{(NM) \times (NM)}$ is a matrix given by
$$A \otimes B = \begin{bmatrix}
B[1,1] \cdot A & B[1,2] \cdot A & \cdots & B[1,M] \cdot A\\
B[2,1] \cdot A & B[2,2] \cdot A & \cdots & B[2,M] \cdot A\\
\vdots & \vdots & \ddots & \vdots \\
B[M,1] \cdot A & B[M,2] \cdot A & \cdots & B[M,M] \cdot A\\
\end{bmatrix}.$$
For a positive integer $n$, we write $A^{\otimes n} \in \F^{N^n \times N^n}$ for the $n$th \emph{Kronecker power} of $A$, which is the Kronecker product of $n$ copies of $A$.

\subsection{Matrices of interest}
For positive integer $N$, we write $I_N$ to denote the $N \times N$ identity matrix. If $N$ is a power of $2$, we write $H_N$ to denote the $N \times N$ Walsh-Hadamard transform, given by $H_N = H_2^{\otimes \log N}$, where $$H_2 = \begin{bmatrix}
1 & 1\\
1 & -1\\
\end{bmatrix}.$$

\section{Walsh-Hadamard Transform}

Let $\F_q$ be the finite field of size $q$, and let $N$ be a power of $2$. In this section, we give an algorithm for computing the length-$N$ Walsh-Hadamard transform, $H_N$, over $\F_q$. The key idea behind our new algorithm is to pick a $K = O(\log_q N)$ such that $q^K \ll N$, and first create a lookup table of the length-$K$ Walsh-Hadamard transforms of all vectors $v \in \F_q^K$. We will then use this lookup table in conjunction with the following standard recursive approach for computing Kronecker powers (sometimes called Yates' algorithm), which we will apply with $M = H_K$:

\begin{lemma} \label{lem:yates}
Let $\F_q$ be any constant-sized finite field, let $m,d$ be positive integers, and let $M \in \F^{d \times d}$ be any matrix. Suppose we are given an algorithm which, on input $v \in \F^d$, outputs $Mv$ in time $T$. Then, there is an algorithm which, on input $z \in \F^{d^m}$, outputs $M^{\otimes m} w$ in time $O(T \cdot m \cdot d^{m-1})$.
\end{lemma}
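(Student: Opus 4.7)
The plan is to induct on $m$ using the classical factorization
$$M^{\otimes m} \;=\; (I_d \otimes M^{\otimes (m-1)}) \cdot (M \otimes I_{d^{m-1}}),$$
which follows from $M^{\otimes m} = M \otimes M^{\otimes (m-1)}$ together with the mixed-product identity $(A \otimes B)(C \otimes D) = (AC) \otimes (BD)$ applied to $M = M \cdot I_d$ and $M^{\otimes(m-1)} = I_{d^{m-1}} \cdot M^{\otimes (m-1)}$. The base case $m=1$ is exactly the assumed subroutine, which runs in time $T = T \cdot 1 \cdot d^0$.

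For the inductive step, I would view the input $w \in \F^{d^m}$ as a two-dimensional array indexed by pairs $(i,j)$ with $i \in [d]$ and $j \in [d^{m-1}]$. First, apply $(M \otimes I_{d^{m-1}})$: for each of the $d^{m-1}$ fixed values of $j$, invoke the given length-$d$ algorithm on the $d$-tuple $(w(i,j))_{i \in [d]}$. This phase performs $d^{m-1}$ calls, costing $T \cdot d^{m-1}$ total. Call the resulting vector $w'$. Next, apply $(I_d \otimes M^{\otimes (m-1)})$ to $w'$: for each of the $d$ fixed values of $i$, invoke the recursively-obtained length-$d^{m-1}$ subroutine on the block $(w'(i,j))_{j \in [d^{m-1}]}$. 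By the inductive hypothesis each such call costs $O(T \cdot (m-1) \cdot d^{m-2})$, and there are $d$ of them, contributing $O(T \cdot (m-1) \cdot d^{m-1})$. Summing the two phases yields $O(T \cdot m \cdot d^{m-1})$ as required.

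The only non-algebraic thing to check is that reading and writing a coordinate of a length-$d^m$ vector (with the appropriate stride corresponding to fixing $i$ or fixing $j$) does not incur hidden overhead beyond the time budget of the length-$d$ and length-$d^{m-1}$ subroutines. In the bit operations / RAM model each such address computation is standard pointer arithmetic costing $O(\log(d^m)) = O(m \log d)$ bit operations per access, which is already subsumed into the constant inside $T$ (since any nontrivial subroutine must read its input). I do not foresee a serious obstacle here: this is the textbook Yates algorithm, and the only care needed is in bookkeeping the stride pattern for the two phases so that each subroutine call sees a contiguous logical block of the working vector.
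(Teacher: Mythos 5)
Your proof is correct and is essentially the paper's argument: both factor $M^{\otimes m}$ into a block-diagonal copy of $M^{\otimes(m-1)}$ times a layer of $d^{m-1}$ independent applications of $M$, yielding the recurrence $E(d^m)=d\cdot E(d^{m-1})+d^{m-1}T$ and hence $O(T\cdot m\cdot d^{m-1})$. The only (immaterial) difference is that you apply the $M$-layer before recursing while the paper recurses first; this is the standard Yates/FWHT decomposition in either order.
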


\begin{proof}
By definition of the Kronecker product, we can write $M^{\otimes m}$ as a $d \times d$ block matrix (where each block is a $d^{m-1} \times d^{m-1}$ matrix) as
\begin{align*} &M^{\otimes m} \\ &= \begin{bmatrix}
M[1,1] \cdot M^{\otimes (m-1)} & M[1,2] \cdot M^{\otimes (m-1)} & \cdots & M[1,d] \cdot M^{\otimes (m-1)}\\
M[2,1] \cdot M^{\otimes (m-1)} & M[2,2] \cdot M^{\otimes (m-1)} & \cdots & M[2,d] \cdot M^{\otimes (m-1)}\\
\vdots & \vdots & \ddots & \vdots \\
M[d,1] \cdot M^{\otimes (m-1)} & M[d,2] \cdot M^{\otimes (m-1)} & \cdots & M[d,d] \cdot M^{\otimes (m-1)}\\
\end{bmatrix}
\\ &= \begin{bmatrix}
M[1,1] \cdot I_{d^{m-1}} & M[1,2] \cdot I_{d^{m-1}} & \cdots & M[1,d] \cdot I_{d^{m-1}}\\
M[2,1] \cdot I_{d^{m-1}} & M[2,2] \cdot I_{d^{m-1}} & \cdots & M[2,d] \cdot I_{d^{m-1}}\\
\vdots & \vdots & \ddots & \vdots \\
M[d,1] \cdot I_{d^{m-1}} & M[d,2] \cdot I_{d^{m-1}} & \cdots & M[d,d] \cdot I_{d^{m-1}}\\
\end{bmatrix} \hspace{-4pt} \times \hspace{-4pt} \begin{bmatrix}
M^{\otimes (m-1)} & & \\
 & M^{\otimes (m-1)} &  & \\
 &  & \ddots &  \\
 &  &  &  M^{\otimes (m-1)}\\
\end{bmatrix}
\end{align*}
Thus, we can multiply $M^{\otimes m}$ times $z \in \F^{d^m}$ with a two-step process (corresponding to multiplying the matrix on the right times $z$, then the matrix on the left times the result):
\begin{enumerate}
    \item Partition $z \in \F^{d^m}$ into $d$ vectors $z_1, \ldots, z_d \in \F^{d^{m-1}}$. Recursively compute $u_i = M^{\otimes (m-1)} z_i \in \F^{d^{m-1}}$ for each $i \in \{ 1, \ldots, d \}$. 
    \item For each $j \in \{ 1, \ldots, d^{m-1} \}$, let $x_j \in \F^d$ be the vector consisting of, for each $i \in \{1, \ldots, d\}$, entry $j$ of vector $u_i$. Use the given algorithm to compute $y_j = M  x_j$.
\end{enumerate}
Finally, we output the appropriate concatenation of the $y_j$ vectors (where the first $d^{m-1}$ entries are the first entries of all the $y_j$ vectors, the second $d^{m-1}$ entries are the second entries of all the $y_j$ vectors, and so on).

Our algorithm makes $d$ recursive calls in the first step, and calls the given algorithm $d^{m-1}$ times in the second step.
Hence, the total running time, $E(d^m)$, has the recurrence $$E(d^m) = d \cdot E(d^{m-1}) + d^{m-1} \cdot T.$$
This solves, as desired, to $E(d^m) = O(T \cdot m \cdot d^{m-1})$.
\end{proof}

We can now give our main algorithm for the Walsh-Hadamard transform:

\begin{theorem} \label{thm:bodywht}
Let $\F_q$ be a finite field of size $q = O(1)$, let $n$ be a positive integer, and let $N = 2^n$. There is an algorithm for computing the length-$N$ Walsh-Hadamard transform over $\F_q$ that uses $O\left(\frac{N \log N}{\log \log N}\right)$ bit operations.
\end{theorem}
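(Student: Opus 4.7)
The plan is to combine Yates' algorithm (Lemma~\ref{lem:yates}) with a precomputed lookup table for a ``chunked'' Walsh-Hadamard transform of a carefully-chosen size. Concretely, pick an integer $k = \Theta(\log\log N)$ and set $K = 2^k$, so that $H_K = H_2^{\otimes k}$ and $H_N = H_K^{\otimes m}$ with $m = n/k$ (handling $k \nmid n$ by first applying one Yates step with a residual block of size $2^{n \bmod k}$, or by padding — a routine adjustment). Instead of treating $H_N$ as the $n$-th Kronecker power of $H_2$, we treat it as the $m$-th Kronecker power of $H_K$, and use a table to multiply $H_K$ by a vector in almost constant time per entry.

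The first step is to build the lookup table. It contains, for every $v \in \F_q^K$, the value $H_K v \in \F_q^K$. Each entry can be computed using the fast Walsh-Hadamard transform in $O(K \log K)$ arithmetic operations, i.e., $O(K \log K \log q)$ bit operations, and each entry is indexed by $K \log q$ bits, so insertion/lookup in a standard tree-based table costs $O(K \log q)$ bit operations. I will choose $k$ so that $q^K \le N^{1/2}$, which holds whenever $k \le \log\log N - \log\log q - O(1)$. With this choice, the total table construction cost is $O(q^K \cdot K \log K \log q) = N^{1/2} \cdot \mathrm{polylog}(N) = o(N)$, which is negligible.

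The second step is to invoke Lemma~\ref{lem:yates} with $M = H_K$, $d = K$, and exponent $m = n/k$. The given ``algorithm for multiplying $M$ by a vector'' is: read the $K \log q$-bit key from memory, perform a single table lookup, and write the $K \log q$-bit result. This costs $T = O(K \log q) = O(K)$ bit operations. Lemma~\ref{lem:yates} then yields total cost
\[
O(T \cdot m \cdot d^{m-1}) \;=\; O\!\left(K \cdot \frac{n}{k} \cdot K^{m-1}\right) \;=\; O\!\left(\frac{n}{k} \cdot K^{m}\right) \;=\; O\!\left(\frac{n \cdot N}{k}\right) \;=\; O\!\left(\frac{N \log N}{\log\log N}\right),
\]
where I used $K^m = N$, $n = \log N$, and $k = \Theta(\log\log N)$. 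Adding the $o(N)$ table-building cost does not change this bound.

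I do not expect any single serious obstacle — the algorithmic idea is essentially forced by the strategy sketched before the theorem. The only points requiring care are (i) verifying that table accesses on $K \log q$-bit keys really cost $O(K)$ bit operations in the RAM/bit model (handled by the tree data structure mentioned earlier in the paper), (ii) selecting $k$ tightly enough that $q^K$ remains well below $N$ while keeping $k$ as large as $\Theta(\log\log N)$ to get the full savings in the Yates bound, and (iii) dealing with the case where $n$ is not a multiple of $k$, which can be dispatched by doing the first (or last) recursive level with a block of size $2^{n \bmod k}$, contributing only $O(N)$ extra bit operations.
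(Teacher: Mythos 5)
Your proposal is correct and follows essentially the same route as the paper: choose $K=2^k$ with $k=\Theta(\log\log N)$ so that $q^K\le\sqrt N$, tabulate $H_Kv$ for all $v\in\F_q^K$, and run Yates' algorithm (Lemma~\ref{lem:yates}) on $H_N=H_K^{\otimes n/k}$ with $T=O(K)$ per lookup, handling $k\nmid n$ by a residual level. The only nitpick is that the residual level of size $2^{n\bmod k}$ done by the standard FWHT costs $O(N\log\log N)$ rather than $O(N)$ bit operations, but this is still dominated by the main term.
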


\begin{proof}
Let $k = \left\lfloor \log \left( \frac{n}{2 \log q} \right) \right\rfloor$, and let $K = 2^k \leq \frac{n}{2 \log q}$. We begin by iterating over all vectors $v \in \F_q^K$, computing $H_K v$, and storing it in a lookup table. We can do this in a straightforward way: there are $q^K$ such vectors, and each can be computed using $O(K^2)$ additions and subtractions over $\F_q$, so the total time to create this lookup table is at most $$O(q^K \cdot K^2) \leq O(q^{n / (2 \log q)} \cdot (\log N)^2) \leq O(\sqrt{N} \cdot (\log N)^2)  \leq O(N^{0.6} ).$$ (This simple time bound could be improved, but we won't bother here since it won't substantially contribute to the final running time.)

Our goal now is, given $v \in \F_q^N$, to compute $H_N v$. Assume for now that $k$ divides $n$.
We will apply \Cref{lem:yates} with $M = H_K$ (and hence $d = K$) and $m = n/k$, which will multiply the matrix $(H_K)^{\otimes n/k} = H_N$ times the vector $v$, as desired. Each time that algorithm needs to multiply $H_K$ times a vector of length $K$, we do so by looking up the answer from the lookup table. Hence, $T$ in \Cref{lem:yates} will be the time to do one lookup from this table whose keys and values have length $O(\log(q^K)) = O(K)$, so $T = O(K)$.

The total number of bit operations of our algorithm is thus, as desired, $$O\left(N^{0.6}  + T \cdot \frac{n}{k} \cdot K^{n/k - 1}\right) = O\left(N^{0.6}  + T \cdot \frac{n}{k \cdot K} \cdot N\right) = O\left(N^{0.6}  + \frac{n}{k} \cdot N\right) = O\left(\frac{N \log N }{\log \log N}\right).$$

Finally, consider when $k$ does not divide $n$. Let $n'$ be the largest multiple of $k$ that is smaller than $n$, so $n-k < n' < n$. By the usual recursive approach (e.g., one recursive step of the algorithm presented in \Cref{lem:yates}), it suffices to first perform $2^{n - n'}$ instances of a length-$2^{n'}$ Walsh-Hadamard transform, and then perform $2^{n'}$ instances of a length-$2^{n-n'}$ Walsh-Hadamard transform.
We now count the number of bit operations for these two steps.

Using the same algorithm as above, since $k$ divides $n'$, a length-$2^{n'}$ Walsh-Hadamard transform can be performed using $O(\frac{n' \cdot 2^{n'} }{\log n'})$ bit operations. Hence, the total bit operations for the first step is $O(2^{n - n'} \cdot \frac{n' \cdot 2^{n'} }{\log n'}) \leq O(\frac{n \cdot 2^{n} }{\log n}) = O(N \log N / \log \log N)$.

Using the usual fast Walsh-Hadamard transform, a length-$2^{n-n'}$ Walsh-Hadamard transform can be performed using $O(2^{n-n'} \cdot (n - n') )$ bit operations. Hence, the total bit operations for the second step is $O(2^{n'} \cdot 2^{n-n'} \cdot (n - n'))\leq O(2^n \cdot k) = O(N \log \log N)$. We thus get the desired total running time.
\end{proof}

\section{Matrix Multiplication}

Fast algebraic algorithms for matrix multiplication over a field $\F$ critically rely on algebraic identities which take the following form, for positive integers $q,r$, formal variables $X_{i,j}, Y_{j,k}, Z_{i,k}$ for $i,j,k \in [t]$, and field coefficients $\alpha_{i,j,\ell}, \beta_{j,k,\ell}, \gamma_{i,k,\ell} \in \F$ for $i,j,k \in [t]$ and $\ell \in [r]$:

\begin{align}\label{eq:mm}\sum_{i=1}^t \sum_{j=1}^t \sum_{k=1}^t X_{i,j} Y_{j,k} Z_{i,k} = \sum_{\ell=1}^r \left( \sum_{i=1}^t \sum_{j=1}^t \alpha_{i,j,\ell} X_{i,j} \right)\left( \sum_{j=1}^t \sum_{k=1}^t \beta_{j,k,\ell} Y_{j,k} \right)\left(  \sum_{i=1}^t \sum_{k=1}^t \gamma_{i,k,\ell} Z_{i,k} \right).\end{align}

As we will see next, an identity (\ref{eq:mm}) can be used to design a matrix multiplication algorithm which runs using only $O(n^{\log_t(r)})$ field operations. 
For instance, Strassen's algorithm~\cite{strassen} gives an identity (\ref{eq:mm}) with $t=2$ and $r=7$, yielding exponent $\log_2(7) < 2.81$. The matrix multiplication exponent $\omega$ is defined as the infimum of all numbers such that,  for every $\eps>0$, there is a sufficiently large $t$ for which one gets an identity (\ref{eq:mm}) with $r \leq t^{\omega + \eps}$. 
Indeed, it is known~\cite{strassen1973vermeidung} that any algebraic algorithm for matrix multiplication can be converted into an identity (\ref{eq:mm}) yielding the same running time in this way, up to a constant factor, so $\omega$ captures the best exponent from any possible algebraic algorithm. The fastest known matrix multiplication algorithm~\cite{coppersmith,alman2021refined} shows that $\omega < 2.37286$.

The standard recursive algorithm for multiplying matrices using identity (\ref{eq:mm}) works as described in \Cref{alg:mm} below. It recurses until it gets to a base case of multiplying $n \times n$ matrices when $n \leq S$ for some parameter $S$. In the usual algorithm, one picks $S$ to be a constant, so that such matrices can be multiplied in a constant number of operations; in our improvement, we will pick a larger $S$ and multiply such matrices using a lookup table. 

\begin{algorithm}[H]\caption{Recursive matrix multiplication algorithm, using identity (\ref{eq:mm}), with base case size $S$}\label{alg:mm}
\begin{algorithmic}[1]
\Procedure{\textsc{MM}}{$X, Y \in \F^{n \times n}$}
\If{$n \leq S$}
\State Use base case procedure to multiply $X \times Y$ and output the result.
\Else
\State Partition $X$ into a $t \times t$ block matrix, with blocks $X_{i,j} \in \F^{(n/t) \times (n/t)}$ for $i, j \in [t]$.
\State Similarly partition $Y$ into a $t \times t$ block matrix, with blocks $Y_{j,k} \in \F^{(n/t) \times (n/t)}$ for $j, k \in [t]$.
\State For $k, i \in [t]$, set the matrices $Z_{k,i} \in \F^{(n/t) \times (n/t)}$ to initially be all 0.
\For{$\ell \in [r]$}
\State Compute $A_\ell := \sum_{i, j \in [t]} \alpha_{i,j,\ell} X_{i,j}$.
\State Compute $B_\ell := \sum_{j, k \in [t]} \beta_{j,k,\ell} Y_{j,k}$.
\State Compute $C_\ell = \textsc{MM}(A_\ell, B_\ell)$ \Comment{Recursively multiply $(n/t) \times (n/t)$ matrices $A_\ell \times B_\ell$.}
\For{$i, k \in [t]$}
\State Add $\gamma_{i,k,\ell} C_\ell$ to $Z_{i,k}$.
\EndFor
\EndFor
\State Output $Z \in \F^{n \times n}$ given by the blocks $Z_{i,k}$.
\EndIf
\EndProcedure
\end{algorithmic}
\end{algorithm}

\begin{lemma}
\Cref{alg:mm} correctly outputs the product $X \times Y$.
\end{lemma}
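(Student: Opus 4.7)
My plan is proof by induction on $n$. The base case $n \leq S$ is immediate from the assumed correctness of the base case procedure, so the real work is in the inductive step. Assuming the algorithm correctly multiplies $(n/t) \times (n/t)$ matrices, each recursive call on line 11 returns $C_\ell = A_\ell B_\ell$ for the matrices $A_\ell, B_\ell$ computed on lines 9--10. The algorithm then outputs the block matrix $Z$ whose $(i,k)$-block equals $\sum_{\ell=1}^r \gamma_{i,k,\ell} A_\ell B_\ell$, so the task reduces to showing this equals the $(i,k)$-block of the true product, namely $\sum_{j=1}^t X_{i,j} Y_{j,k}$.

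The key step is to reinterpret identity~(\ref{eq:mm}) by substituting the matrix blocks $X_{i,j}, Y_{j,k} \in \F^{(n/t) \times (n/t)}$ for the formal variables of the same names, while leaving the $Z_{i,k}$ as scalar indeterminates. Both sides of~(\ref{eq:mm}) then become polynomials in the $Z_{i,k}$ whose coefficients are $(n/t) \times (n/t)$ matrices. Reading off the coefficient of each $Z_{i,k}$ on the two sides yields precisely the identity $\sum_j X_{i,j} Y_{j,k} = \sum_\ell \gamma_{i,k,\ell} A_\ell B_\ell$, which closes the induction.

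The main subtlety, and the only point that needs care, is that~(\ref{eq:mm}) is stated as a polynomial identity in commuting formal variables, whereas the blocks I am substituting for $X_{i,j}$ and $Y_{j,k}$ are non-commuting matrices. This is harmless because on both sides of~(\ref{eq:mm}) every product is written strictly in the left-to-right order $X \cdot Y \cdot Z$, so expanding the sums and distributing never requires swapping an $X$-factor past a $Y$-factor (or vice versa). The $Z_{i,k}$ remain scalars and hence commute with every matrix, so gathering them as the ``outer'' indeterminates is legal. Once this observation is spelled out, the coefficient-matching argument is valid and the inductive step goes through uniformly for any identity of the form~(\ref{eq:mm}).
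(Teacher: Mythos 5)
Your proof is correct and follows essentially the same route as the paper's: extract the coefficient of the formal variable $Z_{i,k}$ from both sides of identity~(\ref{eq:mm}) and match it against what lines 8--15 accumulate. You are somewhat more careful than the paper, which leaves both the induction and the non-commutativity of the substituted matrix blocks implicit; your observation that every product in~(\ref{eq:mm}) appears in the fixed order $X \cdot Y \cdot Z$ is exactly the right justification for that step.
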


\begin{proof}
For a fixed $i, k \in [t]$, we can see that lines 8-15 of the algorithm will set $Z_{i,k}$ to be the following matrix: $$\sum_{\ell=1}^r \gamma_{i,k,\ell} \cdot C_\ell = \sum_{\ell=1}^r \gamma_{i,k,\ell} \cdot \left( \sum_{i=1}^t \sum_{j=1}^t \alpha_{i,j,\ell} X_{i,j} \right)\left( \sum_{j=1}^t \sum_{k=1}^t \beta_{j,k,\ell} Y_{j,k} \right).$$
Notice in particular that this is the coefficient of the formal variable $Z_{i,k}$ in the right-hand side of the identity~(\ref{eq:mm}). Hence, it is also the coefficient of that variable in the left-hand side, namely, $$\sum_{j=1}^t X_{i,j} Y_{j,k}.$$ This is the desired output $(i,k)$ block when multiplying matrices $X,Y$.
\end{proof}

Normally one would analyze the recursion for the number of operations performed by this algorithm when $S$ is a constant, and conclude a total operation count of $O(n^{\log_t(r)})$. Here we will improve on this over finite fields using a lookup table:

\begin{theorem} \label{thm:mainmm}
Fix an identity~(\ref{eq:mm}) and let $\tau = \log_t(r)$.  
For any finite field $\F_q$ of size $q = O(1)$, and any positive integer $n$, there is an algorithm for multiplying $n \times n$ matrices over $\F_q$ which uses $O(n^\tau  / (\log n)^{\tau/2 - 1})$ bit operations.
\end{theorem}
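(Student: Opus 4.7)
The plan is to run \Cref{alg:mm} with a superconstant base-case size $S = \Theta(\sqrt{\log n})$ and to resolve each base case with a single query into a lookup table that stores the product of every ordered pair of $S \times S$ matrices over $\F_q$. With $S^2 = \Theta(\log n)$ the two costs to balance -- preprocessing the table and running the recursion down to size $S$ -- will both land at the target $O(n^\tau/(\log n)^{\tau/2-1})$.

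First I would fix a constant $c = c(\tau, q) > 0$ with $2c^2 \log q < \tau$ and set $S = \lfloor c \sqrt{\log n} \rfloor$. There are only $q^{2S^2} \leq n^{2c^2 \log q}$ ordered pairs of $S \times S$ matrices over $\F_q$, and each product can be computed naively in $O(S^3)$ bit operations, so the whole table can be built in time $n^{2c^2 \log q} \cdot (\log n)^{O(1)}$, which the choice of $c$ makes strictly $o(n^\tau/(\log n)^{\tau/2-1})$. Keys and values in this table have $O(S^2 \log q) = O(\log n)$ bits, so a query or update costs $O(\log n)$ bit operations by the tree data structure mentioned in the preliminaries.

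Next I would execute \Cref{alg:mm} with base-case size $S$, replacing the base case by a single table lookup. Each internal node at recursion depth $i$ performs $O((n/t^i)^2)$ bit operations for the $r$ linear combinations $A_\ell, B_\ell$ and for adding the scaled $C_\ell$'s into the $Z_{i,k}$ blocks, since the field coefficients and the constants $t$ and $r$ are all $O(1)$. Summing by levels, because $\tau > 2$ forces $r > t^2$, the geometric series $\sum_i (r/t^2)^i$ is dominated by its last term, so the total internal-node work is $O(n^2 (n/S)^{\tau-2}) = O(n^\tau/S^{\tau-2})$. The recursion has $(n/S)^\tau$ leaves, each costing $O(\log n) = O(S^2)$ bit operations, contributing another $O(n^\tau/S^{\tau-2})$. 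Substituting $S^2 = \Theta(\log n)$ gives the desired bound $O(n^\tau/(\log n)^{\tau/2-1})$. Boundary cases where $n/S$ is not a power of $t$ are handled by padding inputs with zero rows and columns to the next such value, which inflates $n$ only by a constant factor.

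I expect the main difficulty to be choosing $S$ so that three constraints hold simultaneously: the lookup key must fit in $O(\log n)$ bits (so that each lookup is cheap), the preprocessing must stay $o(n^\tau/(\log n)^{\tau/2-1})$ (forcing $2c^2 \log q < \tau$), and $S$ must be large enough that $S^{\tau-2} = \Theta((\log n)^{\tau/2-1})$. Any $c$ in the open interval $\bigl(0,\sqrt{\tau/(2\log q)}\bigr)$ satisfies all three, after which the rest of the argument is routine bookkeeping.
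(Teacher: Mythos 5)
Your proposal is correct and follows essentially the same route as the paper: build a lookup table of all products of $\Theta(\sqrt{\log n}) \times \Theta(\sqrt{\log n})$ matrix pairs (with keys of $O(\log n)$ bits so each lookup costs $O(\log n)$ bit operations), then run the recursive algorithm down to that base-case size and solve the same recurrence. The only cosmetic differences are that the paper fixes the constant under the square root to $\frac14/\log q$ (making the table-building cost about $n^{0.5}$) rather than any $c$ with $2c^2\log q<\tau$, and it stores products of all $s'\times s'$ matrices for $s'\le s$ instead of padding to handle boundary cases.
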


\begin{proof}
Let $s = \left\lfloor \sqrt{ \frac14 \log n / \log q} \right\rfloor$. We begin by iterating over all positive integers $s' \leq s$ and all pairs of $s' \times s'$ matrices $A, B \in \F_q^{s' \times s'}$, computing their product $A \times B$, and storing it in another lookup table. Similar to before, the running time for this step won't substantially contribute to the final running time, so we give a straightforward upper bound on the time it takes. The number of pairs of matrices we need to multiply is at most $s \cdot (q^{s^2})^2$, and the time to multiply each pair is at most $O(s^3 )$ by the straightforward algorithm. The total time to create the lookup table is hence at most $$ O(s^4 \cdot (q^{s^2})^2) \leq O(q^{\frac12 \log n / \log q} \cdot (\log n)^2) \leq O(n^{0.5} \cdot (\log n)^2).$$
Note that this table's keys and values are strings of length $O(\log (q^{s^2})^2) \leq O(\log n)$, so lookup table operations can be performed using $O(\log n)$ bit operations.

We now use \Cref{alg:mm}, with base case size $S = s$. The base case procedure is that, whenever we need to multiply two $s' \times s'$ matrices for $s' \leq S$, we find the result in the lookup table.

Let $E(m)$ denote the running time of this algorithm to multiply two $m \times m$ matrices. We get $E(m) = O(\log n)$ if $m \leq s$, and if $m > s$, the recurrence \begin{align}\label{eq:recurr}E(m) \leq r \cdot E(m/t) + O(m^2),\end{align}
recalling that $r,t$ are constants given in the identity~(\ref{eq:mm}). The $O(m^2)$ term in the right-hand side of Equation~(\ref{eq:recurr}) counts the bit operations to do a constant number of additions and scalar multiplications of $m/t \times m/t$ matrices. Solving this recurrence yields $$E(n) = O\left( \left( \frac{n}{s} \right)^{\log_t(r)} \cdot \log n \right) = O\left( \frac{n^\tau }{ (\log n)^{\tau/2 - 1}} \right),$$ as desired.
\end{proof}

\section{Conclusion}
We showed that for two important open problems -- determining the complexity of the Walsh-Hadamard transform, and determining the leading constant of Strassen's algorithm for matrix multiplication -- asymptotic improvements over the conjectured optimal arithmetic bounds are possible if one is allowed to use bit operations rather than just arithmetic operations. 

Our algorithms only made use of arithmetic operations and lookup table operations, so they could extend to other models of computation as well. One natural question is whether they extend to the standard word RAM model with word size $w = O(\log n)$ for input size $n$. Indeed, operations for the lookup tables we use, (with keys and values of $O(\log n)$ bits) require only $O(1)$ word operations in this model.

It is not hard to see that our algorithm for matrix multiplication can be implemented to take advantage of this model, improving Theorem~\ref{thm:mainmm} to running time $O(n^\tau / (\log n)^{\tau / 2})$ (improving by a factor of $\log n$). 

On the other hand, our algorithm for the Walsh-Hadamard transform seemingly \emph{cannot} be implemented in the word RAM model to get asymptotic savings. The main culprit is step 2 in the proof of Lemma~\ref{lem:yates}: if we want to efficiently use the lookup table to compute $y_j = M x_j$, then we first have to permute the bits of the $u_i$ vectors so that each $x_j$ fits in a single word. In other words, we are given the $u_i$ vectors in order, and would like to permute their entries to get the $x_j$ vectors in order instead.

Let $s = \Theta(d^m / w)$ be the number of words that our input fits into.
In general it is known that performing a fixed permutation of the bits of a string contained in $s$ words, for $s \geq w$, requires $\Theta(s \log s)$ word operations~\cite{brodnik1997trans}. However, our particular permutation can be broken up into $s/w$ different permutations on $w$ words (e.g., the first words in the descriptions of $x_1, \ldots, x_w$ are a permutation of the first words in the descriptions of $u_1, \ldots, u_w$). It can thus be performed in only $O(s \log w)$ word operations.

Since this must be done at all $m$ levels of recursion, it incurs an additional $\Theta(m \cdot d^m \cdot \frac{\log w}{w})$ word operations. With the parameter setting we ultimately use in Theorem~\ref{thm:bodywht}, this is $\Theta(N)$ word operations to perform all these permutations. By comparison, it is not too difficult to implement the usual fast Walsh-Hadamard transform to use a total of $O(N)$ word operations as well. Hence, the time to perform permutations (which doesn't come into play in the arithmetic or bit operation models) swamps any other computational savings in this model, and another approach is needed for a speedup.

\section*{Acknowledgements}
I would like to thank Dylan McKay and Ryan Williams for invaluable discussions throughout this project, and anonymous reviewers for helpful comments. 
This research was supported in part by a grant from the Simons Foundation (Grant Number 825870 JA).

\bibliographystyle{alpha}
\bibliography{papers}

\end{document}